\newtheorem{theorem}{Theorem}[section]
\newtheorem{lemma}[theorem]{Lemma}
\newtheorem{meta-theorem}[theorem]{Meta-Theorem}
\definecolor{darkgreen}{rgb}{0,0.5,0}
\crefname{theorem}{Theorem}{Theorems}
\crefname{proposition}{Proposition}{Propositions}
\crefname{observation}{Observation}{Observations}
\Crefname{lemma}{Lemma}{Lemmas}
\algnewcommand\algorithmicswitch{\textbf{switch}}
\algnewcommand\algorithmiccase{\textbf{case}}
\newcommand{\eps}{\varepsilon}
\newcommand{\congest}{$\mathsf{CONGEST}$\xspace}
\newcommand{\local}{$\mathsf{LOCAL}$\xspace}
\newcommand{\poly}{\operatorname{\text{{\rm poly}}}}
\renewcommand{\paragraph}[1]{\vspace{0.15cm}\noindent {\bf #1}:}
\newcommand{\FullOrShort}{full}
  \newcommand{\fullOnly}[1]{#1}
  \newcommand{\shortOnly}[1]{}
    \newcommand{\fullOnly}[1]{}
    \newcommand{\IncludePictures}[1]{}
\begin{document}
\title{Strong-Diameter Network Decomposition}
\author{
Yi-Jun Chang\thanks{Supported by  Dr.~Max R\"{o}ssler, by the Walter Haefner Foundation, and by the ETH Z\"{u}rich Foundation.}\\
\small ETH Z\"{u}rich \\
\and
Mohsen Ghaffari\thanks{Supported in part by a Starting Grant (grant agreement No.~853109) from the European Research Council (ERC), under the European Union’s Horizon 2020 research and innovation program.}\\
\small ETH Z\"{u}rich \\
}

\date{}
\maketitle
\thispagestyle{empty}
\setcounter{page}{0}

\begin{abstract}
Network decomposition is a central concept in the study of distributed graph algorithms. We present the first polylogarithmic-round deterministic distributed algorithm with small messages that constructs a strong-diameter network decomposition with polylogarithmic parameters.

Concretely, a ($C$, $D$) \emph{strong-diameter} network decomposition is a partitioning of the nodes of the graph into disjoint clusters, colored with $C$ colors, such that neighboring clusters have different colors and the subgraph induced by each cluster has a diameter at most $D$. In the \emph{weak-diameter} variant,  the requirement is relaxed by measuring the diameter of each cluster in the original graph, instead of the subgraph induced by the cluster. %That is, we allow the cluster to use edges outside the cluster. 

A recent breakthrough of Rozho\v{n} and Ghaffari [STOC 2020] presented the first $\text{poly}(\log n)$-round deterministic algorithm for constructing a weak-diameter network decomposition where $C$ and $D$ are both in $\text{poly}(\log n)$. Their algorithm uses small $O(\log n)$-bit messages. One can transform their algorithm to a strong-diameter network decomposition algorithm with similar parameters. However, that comes at the expense of requiring unbounded messages. The key remaining qualitative question in the study of network decompositions was whether one can achieve a similar result for strong-diameter network decompositions using small messages. We resolve this question by presenting a novel technique that can transform any black-box weak-diameter network decomposition algorithm to a strong-diameter one, using small messages and with only moderate loss in the parameters. 
\end{abstract}

\newpage
\section{Introduction}
Network decomposition is a central concept and a widely-used algorithmic tool in the area of distributed graph algorithms. In this paper, we present the first efficient (i.e., polylogarithmic-round) deterministic distributed algorithm with small messages that constructs a strong-diameter network decomposition with polylogarithmic parameters. This resolves one of the key remaining open problems in the study of network decompositions\footnote{This problem was %concretely 
stated as Open Problem 6 in Ghaffari's keynote at SIROCCO 2020~\cite{ghaffari2020network}.}.
%To present the result in a formal manner, we first review the model, the definition of network decomposition, and the related state of the art. We then describe our contribution.

\subsection{Background: Model and Definitions}
\paragraph{Model of Distributed Computing} We work with the standard \congest model of distributed computing~\cite{peleg00}. The network is abstracted as an $n$-node undirected unweighted graph $G=(V, E)$ where each node represents one processor in the network. We assume each node has a unique $O(\log n)$-bit identifier. Communication takes place in synchronous rounds, where per round each node can send one $B$-bit message to each of its neighbors --- typically, we assume $B=O(\log n)$. The relaxed variant where message sizes are not bounded is called the \local model~\cite{linial92}. Initially, nodes do not know the topology of the network $G$. At the end, each node should know its own part of the output, e.g., when computing a coloring of the graph, each node should know its own color, and when computing a clustering, each node should know its own cluster identifier (perhaps with some additional attributes of the cluster, such as its color and center node). The main measure of interest is the round complexity of the algorithm, i.e., the number of rounds until all nodes terminate and output. It is a common standard in the area to view $\poly(\log n)$ round complexity as a first-order interpretation of efficiency. That is, we would like to have distributed algorithms that run in $\poly(\log n)$ rounds, if not faster. This can be seen as an analog of viewing $\poly(n)$ time complexity as ``efficient'' in centralized computation.

\paragraph{Strong-diameter Network Decomposition} Network decomposition was introduced by Awerbuch, Goldberg, Luby, and Plotkin~\cite{awerbuch89}. Given an undirected graph $G=(V, E)$, a $C$-color $D$-diameter network decomposition --- sometimes also referred to as a $(C, D)$-network decomposition --- is to partition the set $V$ of all nodes into disjoint clusters $V_1, V_2, \dots$, meeting the following conditions.
\begin{itemize}
    \item Each cluster is assigned a color in $\{1, 2, \dots, C\}$ such that clusters that have neighboring nodes have different colors, i.e., any two clusters with the same color must be non-adjacent.
    \item The subgraph $G[V_i]$ induced by each cluster has diameter at most $D$, i.e., for any two nodes $u, v \in V_i$, there is a path of length at most $D$ that connects $u$ and $v$ and is made solely of nodes in $V_i$.
\end{itemize}

%It is usual to assume that as a part of the construction of the cluster, we build a tree of depth at most $D$ in each cluster, rooted in the center node of the cluster. Each node then knows its parent in its cluster-tree, as well as its cluster center.

In an %a rough and 
informal sense, network decomposition allows us to schedule various distributed computation tasks, so that we primarly have to deal with small-diameter clusters. The typical approach is to follow this template: we process the colors of the decomposition one by one. Per color, we process all clusters of this color at the same time. Since the clusters of one color are not adjacent, they can be processed simultaneously. Moreover, their small diameter facilities fast computation and coordination inside each cluster. To make it concrete, in this template, the time to process clusters of one color is proportional to the cluster diameter $D$. Since we have $C$ colors, the overall time is proportional to $C\cdot D$.   

\paragraph{Weak-diameter Network Decomposition} The network decomposition notion described above requires that the subgraph induced by each cluster has diameter at most $D$. That allows each cluster to perform its own communication and computation inside the induced subgraph of the cluster, and thus with no interference on the communications of the other cluster in the same color. This notion is sometimes referred to as a \emph{strong-diameter network decomposition} to distinguish it from a weaker variant, where we relax the second condition and allow the diameter to be measured in the original graph $G$. 

Formally, in a weak-diameter network decomposition, the diameter requirement is that for any two nodes $u, v \in V_i$, there is a path of length at most $D$ in graph $G$ that connects $u$ and $v$, but this path is allowed to include nodes that are not in the cluster $V_i$. In effect, the cluster is allowed to use some of the edges outside the cluster for its communication purposes. 

In the case of weak-diameter network decomposition, it is common to provide more structure which allows different clusters to simultaneously use these ``external" edges: it is required that each cluster $V_i$ has a Steiner tree $\mathcal{T}_i$ of depth at most $D$ in the original graph where all nodes of $V_i$ appear as terminal nodes of $\mathcal{T}_i$. Then, the network decomposition has a third parameter (besides the number of colors $C$ and diameter $D$) known as \emph{congestion} $L$: each edge $e\in G$ can appear in the Steiner trees of at most $L$ many clusters of the same color. Note that in the case of strong-diameter network decomposition, per color, each edge is used in the tree of at most one cluster of this color (the cluster that includes the two endpoints of this edge), and in this sense we have $L=1$.
%and one can view the related congestion $L=1$.

\paragraph{Ball Carving and Relation to Network Decomposition} A concept closely related to network decomposition is that of ball carving, sometimes also referred to as low-diameter graph decomposition. In a \emph{strong-diameter} ball carving of diameter $D$, we receive a boundary parameter $\eps \in(0, 1)$ and then, we remove at most $\eps$ fraction of nodes and we cluster the remaining ones into non-adjacent clusters such that each cluster's induced subgraph has diameter at most $D$. We can also consider a relaxation, which we call \emph{weak-diameter} ball carving, where we require that each two nodes of the same cluster have distance at most $D$ in the original graph, instead of in the cluster's induced subgraph. Similarly, in the case of weak-diameter ball carving, we ask for an additional structure: each cluster  $V_i$ has a Steiner tree $\mathcal{T}_i$ of depth at most $D$ in the original graph where all nodes of $V_i$ appear as terminal nodes of $\mathcal{T}_i$. Again, we say the ball carving has congestion at most $L$ if each edge $e\in E$  appear in the Steiner trees of at most $L$ many clusters.

The relation to network decomposition is that we can obtain network decomposition via simple iterations of ball carving~\cite{linial93}: we repeat the ball carving process for $\log n$ iterations with boundary parameter $\eps=1/2$, each time on the nodes that remain not clustered in the previous iterations. Thus, each time we cluster at least half of the remaining nodes. Hence, within $\log n$ iterations, all nodes are clustered. Clusters that are defined in the $i$th iteration make up the clusters of the $i$th color of the network decomposition.

\paragraph{Applications of Network Decomposition and Ball Carving} Both network decomposition and ball carving are useful building blocks for designing distributed graph algorithms in the \congest model.
Ball carving was used~\cite{ChangS20,ChangS19} in designing distributed algorithms for  \emph{expander decomposition} and \emph{expander routing}, which are widely used tools with many applications~\cite{GhaffariKS17,GhaffariL2018,ChangS19,Daga2019distributed,EFFKO19,izumiLM20,ChatterjeePN20,SuVdsic19,CensorLL20,CensorCLL21}.  Ball carving was also recently used in distributed densest subgraph detection~\cite{su_et_al:LIPIcs:2020:13093}.
Deterministic  distributed algorithms for network decomposition can be applied to transform deterministic distributed algorithms with round complexity $O(D) \cdot \poly(\log n)$ into ones with round complexity $\poly(\log n)$. This approach was used in designing $\poly(\log n)$-round deterministic distributed algorithms for fundamental graph problems such as maximal independent set~\cite{censor2017derandomizing,RozhonG19} and $(\Delta+1)$-coloring~\cite{bamberger2020efficient} in the \congest model. Via other connections, these also led to improved randomized algorithms in the \congest model as well as the massively parallel computation model~\cite{chang2019coloring,ghaffari2016MIS,halldorsson2020efficient}.

\subsection{State of the Art} The pioneering work of Awerbuch, Goldberg, Luby, and Plotkin~\cite{awerbuch89} presented a deterministic distributed algorithm that computes a strong-diameter network decomposition with $2^{O(\sqrt{\log n\cdot \log\log n})}$ colors and diameter $2^{O(\sqrt{\log n\cdot \log\log n})}$ in  $2^{O(\sqrt{\log n\cdot \log\log n})}$ rounds of the \congest model\footnote{Awerbuch et al.~were not explicit about the message sizes, but $O(\log n)$ bit messages suffice for their algorithm.}. Shortly after, Panconesi and Srinivasan~\cite{panconesi-srinivasan} presented a modification of the approach of Awerbuch et al., which resulted in a deterministic distributed algorithm that computes a strong-diameter network decomposition with $2^{O(\sqrt{\log n})}$ colors and diameter $2^{O(\sqrt{\log n})}$ in  $2^{O(\sqrt{\log n})}$ rounds\footnote{We comment that this modification required using unbounded message sizes and for many years it was not know how to achieve these bounds with small messages. That was resolved recently in 2019: Ghaffari~\cite{ghaffari2019MIS} showed that a different modification of the approach of Awerbuch et al.~can directly obtain a network decomposition with $2^{O(\sqrt{\log n})}$ colors and diameter $2^{O(\sqrt{\log n})}$ in  $2^{O(\sqrt{\log n})}$ rounds, and using standard $O(\log n)$ bit messages.}. Linial and Saks~\cite{linial93} observed that for every $n$-node graph, there exists a strong-diameter network decomposition with $O(\log n)$ colors and $O(\log n)$ diameter; this was only an existential result and not a distributed algorithm. They also gave a randomized algorithm that computes a weak-diameter network decomposition with $O(\log n)$ colors and $O(\log n)$ diameter %, and congestion $O(\log n)$, 
in $O(\log^2 n)$ rounds of the \congest model, with high probability\footnote{As standard, we use the phrase ``with high probability" (w.h.p.) to indicate that an even happens with probability at least $1-1/n^c$, for a desirably large constant $c\geq 2$.}. Much more recently, based on a technique developed by Miller, Peng, and Xu~\cite{miller2013parallel}, Elkin and Neiman~\cite{elkin16_decomp} strengthened this to a strong-diameter network decomposition: concretely, they presented a randomized distributed algorithm that computes a strong-diameter network decomposition with $O(\log n)$ colors and $O(\log n)$ diameter, in $O(\log^2 n)$ rounds of the \congest model, with high probability.    

In the context of network decomposition, we would like the number of colors $C$ and the diameter $D$ to be in $\poly(\log n)$. This allows us to use the decomposition following the standard template in $C\cdot D = \poly(\log n)$ rounds, hence resulting in an efficient distributed algorithm. In light of this, we refer to network decompositions with $\poly(\log n)$ colors and diameter as a network decomposition with \emph{good} parameters. 

A fundamental and long-standing question in the study of network decompositions was whether one can obtain an efficient deterministic algorithm for network decomposition with good parameters, having $\poly(\log n)$ colors and diameter, and working in $\poly(\log n)$ rounds. For comparison, a randomized counterpart was known due to the weak-diameter algorithm of Linial and Saks\cite{linial93} and the strong-diameter algorithm by Elkin and Neiman~\cite{elkin16_decomp}. Indeed, this question was at the center of the study of the gap between randomized and distributed graph algorithms; see \cite{ghaffari2017complexity, ghaffari2018derandomizing}. That question was finally resolved %last year in the work of
by Rozho\v{n} and Ghaffari~\cite{RozhonG19}, who presented a $\poly(\log n)$-round deterministic algorithm in the \congest model that computes a weak-diameter network decomposition with $\poly(\log n)$ colors, diameter, and congestion. If we allow unbounded messages, one can transform their algorithm to a $\poly(\log n)$-round \local model algorithm that computes a strong-diameter network decomposition with $O(\log n)$ colors and $O(\log n)$ diameter. This is based on a classic transformation algorithm of Awerbuch et al.~\cite{awerbuch96}.

However, it remained open --- perhaps as the last qualitative question in the study of network decompositions --- whether one can obtain an efficient deterministic distributed algorithm for strong-diameter network decomposition with good parameters and using small messages. %(as in the \congest model).

\subsection{Our Contribution}
In this paper, we resolve the above question and present a $\poly(\log n)$-round deterministic algorithm in the \congest model that computes a strong-diameter network decomposition with good parameters. 
%Concretely, we achieve the following result.

\begin{theorem}\label[theorem]{thm:main}
There are deterministic distributed algorithms in the \congest model that, in any $n$-node network $G=(V, E)$, compute a strong-diameter network decomposition of $G$ 
\begin{itemize}
    \item using $O(\log n)$ colors and with clusters of diameter $O(\log^3 n)$, in $O(\log^8 n)$ rounds, and
    \item using $O(\log n)$ colors and with clusters of diameter $O(\log^2 n)$, in $O(\log^{11} n)$ rounds.
\end{itemize}
\end{theorem}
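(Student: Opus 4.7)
The plan is to follow the standard ball-carving-iteration template: I would design a single-phase subroutine that, given any residual node subset $U \subseteq V$, clusters at least half of $U$ into disjoint strong-diameter clusters of diameter $\poly(\log n)$ in $\poly(\log n)$ rounds of \congest with $O(\log n)$-bit messages. Running this subroutine $O(\log n)$ times on the progressively shrinking unclustered set, and assigning the $i$-th iteration's clusters the color $i$, yields a strong-diameter network decomposition with $O(\log n)$ colors and diameter $\poly(\log n)$, giving the color count in \Cref{thm:main}.

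For the single-phase subroutine, I would invoke the weak-diameter network decomposition of Rozho\v{n}--Ghaffari as a black box on a suitable auxiliary graph, producing clusters equipped with Steiner trees of depth $D_{\mathrm{w}} = \poly(\log n)$ and congestion $L = \poly(\log n)$. Each such cluster can then simulate $T$ rounds of an internal protocol in $O(T \cdot D_{\mathrm{w}} \cdot L)$ rounds of \congest by pipelining along its Steiner tree, so weak clusters already permit nontrivial internal coordination with small messages. The key technical step, which I expect to be the main obstacle, is converting these weak-diameter clusters into genuinely strong-diameter ones without inflating message sizes or losing more than polylogarithmic factors in the parameters, since the classical weak-to-strong transformation of Awerbuch--Berger--Cowen--Peleg crucially relies on unbounded messages.

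My strategy would be to have each weak cluster \emph{absorb} the Steiner-tree nodes it needs into itself, so that after absorption every cluster contains its own internal communication structure and is therefore strong-diameter. Because every node lies on at most $L$ Steiner trees of the same color, a secondary application of the weak-diameter algorithm on the induced ownership-conflict graph would suffice to assign each Steiner node a unique owner cluster consistently, and a subsequent recoloring step would re-establish color-based non-adjacency among the newly thickened clusters. The two regimes of \Cref{thm:main} would then correspond to two trade-offs within this scheme: absorbing aggressively in one pass gives the faster $O(\log^8 n)$-round algorithm at the cost of the looser diameter $O(\log^3 n)$, while a more refined multi-pass absorption shrinks the diameter to $O(\log^2 n)$ at the cost of $O(\log^{11} n)$ rounds. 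The principal difficulty throughout is ensuring that every substep --- conflict resolution, absorption, and recoloring --- can be implemented using only $O(\log n)$-bit messages routed along the Steiner trees of the weak clusters while respecting the $L$-congestion budget at every edge, and that the residual unclustered set still shrinks by a constant factor after each phase despite the nodes ``sacrificed'' to the recoloring step.
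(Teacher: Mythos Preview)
Your outer framework---iterate a half-clustering subroutine $O(\log n)$ times and color by iteration---matches the paper. The gap is in the subroutine: the absorption idea does not work as stated. A Steiner node $v$ may lie on the trees of up to $L$ weak clusters of the same color; once your conflict resolution assigns $v$ to a single owner $\mathcal{C}_1$, the Steiner trees of $\mathcal{C}_2,\ldots,\mathcal{C}_L$ are severed at $v$, and those clusters may fragment into pieces of unbounded strong diameter or even disconnect entirely. Your recoloring step addresses adjacency between \emph{thickened} clusters but does nothing to repair the clusters that \emph{lost} Steiner nodes. To salvage the approach you would have to either drop the damaged clusters---jeopardizing the constant-fraction-clustered guarantee---or re-grow Steiner structure for them, which reintroduces the same overlap problem one level down, and you give no argument that this recursion terminates with only polylogarithmic loss.

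The paper's transformation is entirely different and avoids absorption. It runs weak-diameter ball carving with parameter $\eps' = \Theta(\eps/\log n)$ on the current component $S$ and then applies a size dichotomy: if some weak cluster $\mathcal{C}$ has $|\mathcal{C}| > |S|/2$ (there can be at most one), grow a single BFS ball in $G[S]$ from the root of $\mathcal{C}$'s Steiner tree, starting at radius $R(n,\eps')$ and continuing $O(\log n/\eps)$ more layers until a sparse boundary is found; this produces one strong-diameter output cluster containing all of $\mathcal{C}$, after which every remaining alive component has at most $|S|/2$ nodes. If no cluster is giant, then every connected component of the surviving nodes is already contained in a single weak cluster and hence has size at most $|S|/2$. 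Either way the component size halves, so $\log n$ inner iterations suffice. The $O(\log^2 n)$-diameter regime is obtained not by tuning an absorption parameter but by a separate post-processing lemma that, inside each strong cluster, recursively finds either a balanced sparse cut or a large $O(\log^2 n/\eps)$-diameter subgraph.
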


%The  precise statement of \Cref{thm:main} appears in the technical sections as \cref{thm:decomposition,thm:decomposition2}.

\Cref{thm:main} is proved by combining \Cref{thm:main2} with  the standard reduction~\cite{linial93} from network decompositions to ball carving, i.e., with  $\log n$ repetitions of ball carving with $\eps = 1/2$.

\begin{theorem}\label[theorem]{thm:main2}
There are deterministic distributed algorithms in the \congest model that, in any $n$-node network $G=(V, E)$, compute a strong-diameter ball carving of $G$ 
\begin{itemize}
    \item  with clusters of diameter $O(\log^3 n /\eps)$, in $O(\log^7 n /\eps)$ rounds, and
    \item  with clusters of diameter $O(\log^2 n /\eps)$, in $O(\log^{10} n /\eps)$ rounds.
\end{itemize}
\end{theorem}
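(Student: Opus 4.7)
The strategy is to give a black-box reduction from the weak-diameter ball carving of Rozho\v{n}--Ghaffari (RG) to a strong-diameter one, losing only $\poly(\log n)$ factors in diameter and round complexity. The plan is to first run RG on $G$ with boundary parameter $\eps' = \Theta(\eps/\log n)$, producing weak clusters $\{V_i\}$ with Steiner trees $\{T_i\}$ of depth $D_w = \poly(\log n)/\eps'$ and congestion $L = \poly(\log n)$, with at most $\eps'|V|$ unclustered nodes. Each tree $T_i$ is a connected subgraph of $G$ of diameter $O(D_w)$ containing its terminals $V_i$, so if every non-terminal tree node could be assigned to a unique cluster, the enlarged clusters would automatically have strong diameter $O(D_w)$. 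The obstruction is that the $T_i$ overlap, and a single node can lie on up to $L$ distinct trees.

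To resolve these conflicts in the \congest model, I would build a virtual ``cluster graph'' $H$ whose vertices are the weak clusters and whose edges connect $V_i$ and $V_j$ whenever $T_i \cap T_j \ne \emptyset$. A single round of communication on $H$ can be simulated in $O(D_w + L) = \poly(\log n)/\eps'$ rounds of $G$ by the standard congestion--dilation bound on the union of the Steiner trees. Running RG on $H$ itself then yields a $\poly(\log n)$-color weak-diameter decomposition in which clusters of the same $H$-color have pairwise disjoint trees in $G$. I would then process the $H$-color classes one at a time: within a class, each $V_i$ simultaneously claims the nodes of $T_i$ without conflict, producing a connected cluster in $G$ of strong diameter $O(D_w)$. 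After all color classes are processed, every tree node is uniquely claimed, and a constant fraction of the remaining nodes of $G$ becomes strongly clustered. Iterating the whole procedure $O(\log(1/\eps))$ times on the still-unclustered residual drives the unclustered fraction below $\eps$.

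The hardest step, and the main obstacle I would expect, is arguing that the claims within each $H$-color class really produce connected subgraphs of $G$: a tree node claimed by $V_i$ must remain reachable inside the cluster via other claimed nodes, which forces the claims to form subtrees of $T_i$ rooted at the terminals, and in turn requires $H$ to encode \emph{all} overlaps whose removal could fragment a claim. A second delicate point is the round bookkeeping: the simulation of RG on $H$ stacks two layers of Steiner-tree routing, each contributing $\poly(\log n)$ congestion, so the product must still land in $\poly(\log n)$.

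For the first bullet of \Cref{thm:main2}, the diameter is $O(D_w) \cdot O(\log n) = O(\log^3 n/\eps)$, obtained in $O(\log^7 n/\eps)$ rounds from composing the outer RG call on $G$ with the $H$-level RG call. The second bullet, which saves a $\log n$ factor in diameter at the cost of $\log^3 n$ extra rounds, should follow from a sharper variant in which the iteration is internally recursive, so that the diameter is amortized rather than multiplied across the stages.
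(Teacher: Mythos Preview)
Your approach has a genuine gap at precisely the point you flag as hardest. Properly coloring the cluster graph $H$ so that same-color Steiner trees are disjoint does \emph{not} resolve the assignment conflicts: a node $v$ lying on both $T_i$ (color~$1$) and $T_j$ (color~$2$) must go to exactly one of them, and whichever tree loses $v$ may become disconnected. Your suggestion that ``claims form subtrees of $T_i$ rooted at the terminals'' is not a resolution --- if $V_j$ may only claim the portion of $T_j$ not already taken by earlier colors, nothing guarantees that this remainder still connects all terminals of $V_j$, so the resulting ``cluster'' need not have bounded (or even finite) strong diameter. There is also a quantitative problem with the simulation: a tree $T_i$ has $\Theta(|V_i|)$ edges, each shared with up to $L$ other trees, so the degree of $V_i$ in $H$ is not $\poly(\log n)$, and one \congest round on $H$ cannot in general be simulated in $O(D_w+L)$ rounds of $G$.

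The paper's route is entirely different and sidesteps both issues by never attempting to strengthen all weak clusters simultaneously. In each of $\log n$ iterations it runs the weak-diameter carving on the current component $S$ with parameter $\eps' = \eps/(2\log n)$ and asks whether some resulting cluster is \emph{giant}, i.e., contains more than $|S|/2$ nodes. If so, a single BFS from the root of that one cluster's Steiner tree --- grown for $O(\log n/\eps)$ further layers until a thin boundary layer is found --- carves out one strong-diameter ball containing the giant cluster; if not, every connected component of survivors already has at most $|S|/2$ nodes. Either way the component size halves, so $\log n$ iterations suffice, and at most one BFS ball is extracted per iteration, with no tree-overlap bookkeeping at all. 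The second bullet's $O(\log^2 n/\eps)$ diameter comes from a separate post-processing lemma (applied inside each strong cluster already produced) that recursively either finds a balanced sparse cut or extracts a large $O(\log^2 n/\eps)$-diameter subcomponent; this is again a halving argument, not the amortization you sketch.
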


%The  precise statement of \Cref{thm:main2} appears  in the technical sections as \cref{thm:carving,thm:carving2}.

The key novelty in achieving \Cref{thm:main2} is a %\congest-model
message-efficient
deterministic reduction that can efficiently transform \emph{any} % (black-box) 
algorithm for weak-diameter ball carving into an algorithm for strong-diameter ball carving, with only moderate loss in the diameter and round complexity. If the former algorithm is deterministic, so is the latter. %We note that 

Such a transformation was previously known using unbounded message sizes, due to a technique of Awerbuch et al.\cite{awerbuch96}. However, that technique heavily relies on these large messages, and uses them to gather the entire topology around certain clusters to perform computation in a centralized fashion. Our transformation technique is completely different and works with small messages. %We note that 
Such a transformation was not previously known even when allowing randomized algorithms. Indeed, while an efficient randomized algorithm for weak-diameter network decomposition was known since the  work of Linial and Saks~\cite{linial93} in 1993, it took until 2016 that Elkin and Neiman~\cite{elkin16_decomp} presented an efficient randomized algorithm for strong-diameter network decomposition.  Their approach gives a new strong-diameter network decomposition algorithm and not a general transformation from weak-diameter network decompositions.   
%\todo{state the ball carving result}
%\todo{State all the other results about ball carving, for nodes and edges, etc}

See \cref{tab:decomposition,tab:carving} for a list of our results about network decomposition and ball carving, comparing with the results from previous work.
 We emphasize that all results in \cref{tab:carving} not only apply to the \emph{node version} of ball carving, but they also apply to the  \emph{edge version}, where  we remove at most an $\eps$ fraction of the edges, instead of removing nodes. The proofs for the edge version are essentially the same as that for the node version, so they are omitted for simplicity.
 
 It  remains an intriguing open question whether a strong-diameter ball carving with diameter $O(\log n / \eps)$ and a strong-diameter network decomposition with $O(\log n)$ colors and $O(\log n)$ diameter can be constructed in polylogarithmic rounds in the \congest model.

\begin{table*}[ht]
    \centering
        \caption{Network decomposition in the \congest model}
    \begin{tabular}{|l|l|l|l|l|l|}
%%%%%
\multicolumn{1}{l}{\bf Type} & 
\multicolumn{1}{l}{\bf Model} & 
\multicolumn{1}{l}{\bf Colors} & 
\multicolumn{1}{l}{\bf Diameter} &
\multicolumn{1}{l}{\bf Rounds} &
\multicolumn{1}{l}{\bf Reference}
\\ \hline
\multirow{3}{*}{Weak} & Randomized & $O(\log n)$ & $O(\log n)$ &  $O(\log^2 n)$ & \cite{linial93}
\\ \cline{2-6}
& \multirow{2}{*}{Deterministic} & $O(\log n)$ & $O(\log^3 n)$ & $O(\log^7 n)$ & \cite{RozhonG19}
\\ \cline{3-6}
&  & $O(\log n)$ & $O(\log^2 n)$ & $O(\log^5 n)$ & \cite{ghaffariGR2021improvedND}
\\ \hline
\multirow{5}{*}{Strong} & Randomized & $O(\log n)$ & $O(\log n)$ &  $O(\log^2 n)$ & \cite{miller2013parallel,elkin16_decomp}
\\ \cline{2-6}
& \multirow{4}{*}{Deterministic} & $2^{O(\sqrt{\log n \log \log n})}$ & $2^{O(\sqrt{\log n \log \log n})}$ & $2^{O(\sqrt{\log n \log \log n})}$ & \cite{awerbuch89}
\\ \cline{3-6}
&  & $2^{O(\sqrt{\log n})}$ & $2^{O(\sqrt{\log n})}$ & $2^{O(\sqrt{\log n})}$ & \cite{ghaffari2019MIS,panconesi-srinivasan} 
\\ \cline{3-6}
&  & $O(\log n)$ & $O(\log^3 n)$ & $O(\log^8 n)$ & \cref{thm:decomposition}
\\ \cline{3-6}
&  & $O(\log n)$ & $O(\log^2 n)$ & $O(\log^{11} n)$ & \cref{thm:decomposition2}
\\ \hline
    \end{tabular}
    \label{tab:decomposition}
\end{table*}

\begin{table*}[ht]
    \centering
        \caption{Ball carving in the \congest model}
    \begin{tabular}{|l|l|l|l|l|}
%%%%%
\multicolumn{1}{l}{\bf Type} & 
\multicolumn{1}{l}{\bf Model} & 
\multicolumn{1}{l}{\bf Diameter} &
\multicolumn{1}{l}{\bf Rounds} &
\multicolumn{1}{l}{\bf Reference}
\\ \hline
\multirow{3}{*}{Weak} & Randomized  & $O(\log n  / \eps)$ &  $O(\log n  / \eps)$ & \cite{linial93}
\\ \cline{2-5}
& \multirow{2}{*}{Deterministic}  &  $O(\log^3 n  / \eps)$ & $O(\log^6 n  / \eps^2)$ & \cite{RozhonG19}
\\ \cline{3-5}
& & $O(\log^2 n  / \eps)$  & $O(\log^4 n  / \eps^2)$ & \cite{ghaffariGR2021improvedND}
\\ \hline
\multirow{3}{*}{Strong} & Randomized  & $O(\log n / \eps)$ &  $O(\log n  / \eps)$ & \cite{miller2013parallel,elkin16_decomp}
\\ \cline{2-5}
& \multirow{2}{*}{Deterministic}   & $O(\log^3 n  / \eps)$  & $O(\log^7 n  / \eps^2)$& \cref{thm:carving}
\\ \cline{3-5}
& & $O(\log^{2} n  / \eps)$  & $O(\log^{10} n  / \eps^2)$ & \cref{thm:carving2}
\\ \hline
    \end{tabular}
    \label{tab:carving}
\end{table*}

\subsection{Our Method% in a Nutshell
}
The core technical novelty in our work is presenting a transformation algorithm that converts an algorithm for weak-diameter ball carving into an algorithm for strong-diameter ball carving. 
%, with only moderate loss in the diameter parameter and the round complexity. As mentioned before, a conceptually similar transformation was previously known,  due to a seminal work of Awerbuch et al.~\cite{awerbuch96}, which transforms weak-diameter network decompositions algorithms to strong-diameter ones. However, that transformation uses unbounded messages. In contrast, our transformation --- which works in a completely different way as we soon discuss --- works with $O(\log n)$ bit messages of the \congest model. Next, 
We first provide a brief overview of the transformation technique of Awerbuch et al., and then we outline our transformation technique.

\paragraph{Recap on the Transformation Technique of Awerbuch et al.~\cite{awerbuch96}} Suppose that we have a weak-diameter network decomposition algorithm $\mathcal{A}$ that on $n$-node networks finds clusters with $C(n)$ colors and $D(n)$ weak diameter. We describe how Awerbuch et al.~obtain a strong-diameter ball carving with boundary parameter $\eps=1/2$. The overall strong-diameter network decomposition then follows by the standard connection from ball carving to network decomposition as mentioned above. %, i.e., with just $\log n$ repetitions of ball carving. 

Awerbuch et al.~first run this decomposition algorithm $\mathcal{A}$ on $G^{2d}$, where $d=\log n$. Here, $G^{2d}$ denotes the power graph where we put an edge between any two nodes whose distance is at most $2d$. Thus, we get a weak-diameter network decomposition with $C(n)$ colors and clusters of weak-diameter $D(n) \cdot 2d = D(n) \cdot \Theta(\log n)$ in $G$, where any two clusters of the same color have distance at least $2d+1$. Then, we process the colors of this weak-diameter network decomposition one by one, following the standard template of using network decomposition: per color, each cluster $\mathcal{C}$ gathers into the center of the cluster the topology of the entire cluster as well as $d$-hop neighborhood of the cluster. Note that since any two clusters have distance at least $2d+1$, these gathered topologies are disjoint. Then, the cluster $\mathcal{C}$ simulates a sequential ball carving process in a centralized fashion to define the output strong-diameter balls: each time, we pick a node $v \in \mathcal{C}$ that remains not clustered in the output ball carving, and we find the smallest value $r$ such that $|B_{r+1}(v)|/|B_{r}(v)| \leq 2$. Here, $B_{r}(v)$ denotes the set of all nodes within distance at most $r$ from node $v$, in the subgraph induced by the remaining node. It is easy to see that $r \leq \log n = d$. We then take $B_{r}(v)$ as one strong-diameter cluster of the output ball carving, and we remove all of its nodes. We also remove all nodes of $B_{r+1}(v)\setminus B_{r}(v)$ and declare them dead, meaning that they are not clustered in our strong-diameter ball carving. Since $|B_{r+1}(v)|/|B_{r}(v)| \leq 2$, we clustered at least half of the nodes that we removed. We then proceed to the next node $v'$ in the cluster $\mathcal{C}$, if any node remains. This is a very sequential process. However, we are performing it in a centralized manner as a local computation at the center of the cluster. Once the cluster center computes all these strong-diameter clusters, it informs all the nodes. These define the clusters of our desired ball carving. Note that computations of two different weak-diameter clusters never interfere as each cluster works only within $d = \log n$ neighborhood of its nodes, and each two clusters have distance at least $2d+1 = 2\log n+1$. Once we process all colors of the weak-diameter network decomposition, we have defined the clusters of the output ball carving, and at least half of the nodes have been clustered. As mentioned before, this strong-diameter ball carving can then be used easily, by $d = \log n$ repetitions, to construct the desired strong-diameter network decomposition. 

\paragraph{Our Transformation} We outline our transformation here for the case $\eps = 1/2$ while ignoring several details and hiding some lower-ordering technicalities. The actual transformation is explained in the next section. Suppose that we have a weak-diameter ball carving algorithm $\mathcal{A}$ that, given a parameter $\eps'$, on any $n$-node network removes at most $\eps'$ fraction of nodes and clusters the remaining ones into non-adjacent clusters with $R(n, \eps)$-depth Steiner trees. 
%We note that 
In our particular case, we will apply the weak-diameter ball carving algorithm %provided 
of Ghaffari, Grunau, and Rozho\v{n}~\cite{ghaffariGR2021improvedND}, which is an optimized variant of the %weak-diameter ball carving of
one of Rozho\v{n} and Ghaffari~\cite{RozhonG19}. 
%Using $\mathcal{A}$, we build a strong-diameter ball carving with boundary parameter $\eps=1/2$, which thus leads to our desired strong-diameter network decomposition in the standard fashion. 

Our algorithm has $\log n$ iterations. In the first iteration, we apply algorithm $\mathcal{A}$ with parameter $\eps'=1/(4\log n)$ and obtain some weak-diameter clusters. There are two possibilities: 
\begin{itemize}
    \item[(I)] Suppose that one of these clusters is \emph{giant}, meaning that it includes at least half of the nodes. Then, we perform a ball carving just on this one cluster, starting with the root node of the Steiner tree for the cluster and with an initial radius that ensures to cover all nodes of this giant cluster. We then gradually grow the radius hop by hop until we find a place where the boundary is of size at most $1/4$ of the nodes inside the ball. Such a radius is found within $O(\log n)$ steps of growth, as otherwise the ball size would go beyond $n$. From a computational perspective, it suffices to gather the sizes of the BFS layers around the chosen node of the giant cluster, and this can be easily performed in the \congest model. Thus, we get a strong-diameter ball with diameter $2R(n, 1/(4\log n)) + O(\log n)$ that covers the entire giant cluster. We then remove this found ball as one cluster of the output strong-diameter ball carving. Since we covered all of the giant cluster, we have clustered at least half of the nodes. We then proceed to the next iteration in the remaining nodes. 
    
    \item[(II)] In the second case, there is no such giant cluster and each cluster has at most half of the nodes. In this case, each connected component of the subgraph induced by alive nodes is also made of at most half of the nodes, because each connected component is a subset of one cluster (otherwise, there would be two adjacent clusters in one connected component). In this case, we simply move to the next iteration and we process each connected component separately. 
\end{itemize}
In either of the two cases (I) and (II), the connected component has at most half of the number of nodes in the previous iteration. The next iteration then repeats the process on each connected component separately. Since each time the size of the connected component shrinks by a  factor of $2$, we are done in $\log n$ iterations. Moreover, per iteration, the weak-diameter carving removes at most $\eps'=1/(4\log n)$ fraction of nodes. Hence, even over all the iterations, we remove at most $1/4$ of the nodes in the course of applying weak-diameter carving. On the other hand, the strong-diameter  carving algorithm that we apply in case (I) clusters some number $k$ of nodes, as one cluster of the output strong-diameter ball carving, and removes at most $k/4$ nodes as boundary. Since each node appears at most once as a part of the output strong-diameter ball carving, overall we lose at most $1/4$ of the nodes because of these strong-diameter network decompositions. Hence, taking both kinds of node removals into account, we remove at most $\eps = 1/2$ fraction of nodes.   
We present the formal and detailed description of this transformation in \Cref{sec:transformation}. 

\paragraph{Improving Diameter} The above transformation procedure loses a small $O(\log n)$ factor in the diameter of the clusters, because we have to set the boundary parameter of the employed weak-diameter carving smaller by a factor $O(\log n)$ to allow room for the boundary removals of different iterations. To mitigate this, in \Cref{sec:diameter-imp}, we explain a different algorithm that processes these larger strong-diameter balls and improves their diameter. Hence the overall transformation gets us to balls of strong-diameter $O(\log^2 n)$, at the expense of only $O(\log^3 n)$ factor larger number of rounds in the \congest model. 

%\todo{Maybe a one-paragraph summary of \Cref{sec:diameter-imp} can also be added here and would help}

%\vspace{3cm}

%\paragraph{Notes for writing up the paper}

%\begin{itemize}
%    \item We should state the results in terms of both low-diameter decomposition (remove $\eps$-fraction of the edges so that each remaining connected component has diameter at most $D$) and network decompositions ($C$-coloring for $D$-diameter components).
%    \item State the results for both the edge and the node versions for the low-diameter decomposition: remove $\eps$-fraction of the edges or remove $\eps$-fraction of the nodes. We just need the theorem statements, and there is no need to repeat the proofs.
%    \item It would be great if we can have a table summarizing the state-of-the-art round complexity for all choices of parameters $(\eps,D)$ and $(C,D)$. 
%    \item We should discuss the applications of network decompositions and low-diameter decomposition in \congest, and that our strong decompositions lead to simplification of proofs. This includes expander decompositions and expander routing, which are also useful tools with many applications.
%\end{itemize}

%\newpage
\section{Strong-diameter Ball Carving via Weak-diameter Ball Carving}
\label[section]{sec:transformation}
In this section we describe our \textit{strong-diameter ball carving} algorithm.
%that, in  $\poly(\log n)$ rounds, clusters at least $(1 - \eps)$ fraction of the nodes into non-adjacent clusters, each with strong-diameter at most $\poly(\log n)$. 
%The claim in the abstract follows immediately, by repeating the process $\log n$ times, each time on the nodes that remain not clustered. 
For the sake of simplicity, in the proof of \cref{thm:transformation} we assume that  the number of nodes $n$ is a global knowledge. This assumption can be removed  by first constructing a weak-diameter ball carving with boundary parameter $\eps/2$,  counting the number of nodes $n' = |\mathcal{C}|$ in each cluster $\mathcal{C}$, applying the algorithm of \cref{thm:transformation} with boundary parameter $\eps/2$ to the subgraph induced by each cluster $\mathcal{C}$ in parallel. 
Alternatively, the assumption that each node has a unique identifiers of length $\ell = \Theta(\log n)$ implies that $2^\ell$ is an upper bound on the number of nodes, and  we may also use $n = 2^\ell$ in \cref{thm:transformation}.

%For now, we ignore this issue.

%\todo{Maybe add a concrete lemma about this}
 
In the description of \cref{thm:transformation}, we note that a cluster $\mathcal{C}$ resulting from the given weak-diameter ball carving algorithm $\mathcal{A}$ can have unbounded strong diameter, and it may even induce a disconnected subgraph. However, $\mathcal{C}$  has weak diameter at most $2R(n, \eps)$ due to the Steiner tree $\mathcal{T}$, which may contain nodes outside of the cluster $\mathcal{C}$.

\begin{theorem}\label[theorem]{thm:transformation}
Suppose that there is a distributed algorithm $\mathcal{A}$ in the \congest model that, on any $n$-node graph and given a parameter $\eps$, removes at most an $\eps$ fraction of the nodes and clusters the remaining ones into non-adjacent clusters meeting the following conditions:
\begin{itemize}
    \item Each cluster $\mathcal{C}$ has a Steiner tree $\mathcal{T}$ such that all nodes in $\mathcal{C}$ appear as terminals of $\mathcal{T}$. 
    \begin{itemize}
    \item The depth of $\mathcal{T}$ is at most $R(n, \eps)$, i.e., the distance between the root node and any terminal node in the tree is at most  $R(n, \eps)$.
    \item Each edge belongs to at most $L(n, \eps)$ Steiner trees.
    \end{itemize}
    \item The algorithm takes at most  $T(n, \eps)$ rounds.
\end{itemize}
Then, there is a distributed algorithms $\mathcal{B}$ in the \congest model that, on any $n$-node graph and given a parameter $\eps$, removes at most an $\eps$ fraction of the nodes so that each remaining connected component has strong diameter $2R(n, \eps/(2\log n)) + O(\log n/\eps)$, with round complexity 
\ifdefined\DoubleColumn
\begin{align*}
& O(\log n) \cdot T(n, \eps/(2\log n)) \\
& +
O(\log n) \cdot  R(n, \eps/(2\log n)) \cdot L(n, \eps/(2\log n))\\
& +
O(\log^2 n / \eps).
\end{align*}
\else
\[
O(\log n) \cdot T(n, \eps/(2\log n)) +
O(\log n) \cdot  R(n, \eps/(2\log n)) \cdot L(n, \eps/(2\log n)) +
O(\log^2 n / \eps).
\]
\fi
\end{theorem}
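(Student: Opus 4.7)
The plan is to carry out the iterative reduction sketched in the paper's method section, instantiated with the sharpened parameter $\eps' = \eps/(2\log n)$. I would set up $O(\log n)$ phases and maintain the invariant that at the start of phase $t$, every alive node in a connected component $K$ of the alive subgraph knows an upper bound $u$ with $|K| \le u \le n/2^{t-1}$; initially $u = n$.

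Each phase runs the following on the alive subgraph, which decomposes into disjoint CCs. First, invoke $\mathcal{A}$ with parameter $\eps'$ to obtain weak-diameter clusters equipped with Steiner trees of depth $\le R(n,\eps')$ and edge congestion $\le L(n,\eps')$; this costs $T(n,\eps')$ rounds. Then, by pipelined upcast and downcast along each Steiner tree, every cluster member learns both its cluster size $|\mathcal{C}|$ and the current value of $u$, at cost $O(R(n,\eps') \cdot L(n,\eps'))$ rounds. Call a cluster \emph{giant} if $|\mathcal{C}| > u/2$. If a giant cluster is present in a CC, the root of its Steiner tree initiates a BFS in the alive subgraph, starting from the ball $B_{R(n,\eps')}$, which already contains the entire giant cluster by its weak diameter, and grows hop by hop, pipelining layer sizes back to the root, until the smallest radius $r^\star \ge R(n,\eps')$ with $|B_{r^\star+1}|/|B_{r^\star}| \le 1 + \eps/2$ is found. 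Then $B_{r^\star}$ is finalized as an output strong-diameter cluster and the shell $B_{r^\star+1} \setminus B_{r^\star}$ is killed. All nodes left unclustered by $\mathcal{A}$ are also killed. Each surviving alive node updates its upper bound to $u' = |\mathcal{C}|$ for the non-giant cluster it currently lies in, a value already in hand from the earlier aggregation; since non-giant means $|\mathcal{C}| \le u/2$, the invariant is preserved.

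For correctness and parameters, the geometric-growth argument $(1+\eps/2)^{r^\star - R(n,\eps')} \le n$ gives $r^\star - R(n,\eps') = O(\log n/\eps)$, so every output ball has strong diameter at most $2r^\star = 2R(n,\eps') + O(\log n/\eps)$. Weak-carving deaths total at most $\eps' n \log n = \eps n/2$ across all phases, and strong-carving shells sum to at most $(\eps/2) \sum |B_{r^\star}| \le (\eps/2) n$ because the output balls are disjoint, for at most $\eps n$ dead nodes overall. After $\log n$ phases, $u < 1$ in every surviving CC, so only singletons and finalized balls remain, both of diameter at most $2R(n,\eps') + O(\log n/\eps)$. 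The per-phase cost $T(n,\eps') + O(R(n,\eps') \cdot L(n,\eps')) + O(\log n/\eps)$, multiplied by $O(\log n)$ phases, matches the claimed round complexity.

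The main obstacle will be implementing the giant-extraction step without parallel BFSs colliding. Across CCs this is automatic, since distinct alive CCs share no edges in $G$; within one CC the strict inequality $|\mathcal{C}| > u/2$ precludes two disjoint giants (they would jointly exceed $u \ge |K|$), so at most one BFS runs per CC. One also must verify that after extracting the giant ball and killing the weak-carving boundary, every remaining alive node of $K$ lies in a single non-giant weak cluster, so that the new $u'$ validly upper-bounds its new CC size; this follows because weak clusters are non-adjacent in the alive subgraph, forbidding any new CC from spanning two of them once the unclustered boundary between them has been killed.
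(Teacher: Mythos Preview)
Your proposal is correct and follows essentially the same approach as the paper: $\log n$ phases, each invoking $\mathcal{A}$ with parameter $\eps' = \eps/(2\log n)$, then either falling through to non-giant clusters or carving a single strong-diameter ball around the root of the giant cluster's Steiner tree. The differences are cosmetic---you track a local per-component upper bound $u$ and set the next $u' = |\mathcal{C}|$, whereas the paper uses the global threshold $n/2^i$; you kill the $\mathcal{A}$-unclustered nodes in every phase, whereas the paper does so only in the no-giant case; and your ball-growth stopping rule $|B_{r+1}|/|B_r| \le 1+\eps/2$ differs from the paper's $|B_r|/|B_{r+1}| \ge 1-\eps/2$ only in how the shell is charged. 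None of these changes the argument, and your handling of the per-component bound is arguably a cleaner distributed bookkeeping than the paper's reliance on a globally known iteration index.
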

\begin{proof}
We  build the claimed strong-diameter ball carving algorithm $\mathcal{B}$ via black-box invocations to the promised weak-diameter ball carving algorithm $\mathcal{A}$. The algorithm is made of a number of iterations. We next first present the outline of what these iterations aim to achieve, and then explain how each iteration works. We conclude then by analyzing this algorithm and arguing that it proves the theorem.

\paragraph{Algorithm Outline}
The algorithm has $\log n$ iterations. During these iterations, we define some strong-diameter balls and remove the nodes on the boundary and we also declare some more nodes dead. As a result, the algorithm zooms into smaller and smaller connected components of the subgraph induced by alive nodes. The algorithm guarantees that %, the alive nodes are such that, 
at the start of the $i$th iteration, each connected component of alive nodes has size at most $n/2^{i-1}$.  
Moreover, we declare at most $\eps$ fraction of the nodes \emph{dead}, throughout the algorithm. These dead nodes are not included in the final clustering.
We emphasize that different copies of the algorithm run in different components independently and simultaneously. 

\paragraph{One Iteration} Consider iteration $i$. Suppose that we are now considering one connected component of alive nodes and the set of nodes in this component is $S$. By the guarantee from the previous iterations, we have $|S|\leq n/2^{i-1}$. %If we are in the first iteration, then $S = V$ is the set of all nodes in the graph $G$.

We first run the weak-diameter ball carving algorithm $\mathcal{A}$ on $G[S]$ with boundary parameter $\eps'= \eps/ (2\log n)$. As a result, we get clusters in $G[S]$ that are non-adjacent, 
%and each has weak-diameter $R(n, \eps/(2\log n))$ in $G[S]$, 
while at most $\eps'$ fraction of nodes of $S$ are not clustered.  The clusters are augmented with $R(n, \eps/(2\log n))$-depth Steiner trees in graph $G[S]$, one for each cluster, with congestion $L(n, \eps/(2\log n))$. 

Considering these clusters, we have two cases: 
\begin{itemize}
    \item[(I)] First, suppose that each of the clusters in this weak-diameter ball carving has at most $n/2^{i}$ nodes. In this case,
     we declare those nodes that are not clustered \emph{dead} for the rest of the strong-diameter ball carving process, and then we move to the next iteration, where each of the connected components of the alive nodes is handled separately. Note that if each cluster has at most $n/2^{i}$ nodes, then definitely each connected component of the alive nodes in $S$ also has at most $n/2^{i}$ nodes, as each connected component is a subset of one cluster.
    
    \item[(II)] Second, suppose that there is one cluster $\mathcal{C}$ such that $|\mathcal{C}|> n/2^{i}$. Note that we can have at most one such cluster because $|S|\leq n/2^{i-1}$. Let $a$ be the root of the Steiner tree $\mathcal{T}$  for $\mathcal{C}$. We run a sequential ball carving process from $a$ starting with radius $R(n, \eps/(2\log n))$, and for $O( \log n/ \eps)$ radius growth steps\footnote{The diameter of  $G[S]$ may be smaller than $R(n, \eps/(2\log n))$, in which case the BFS from $a$ covers all nodes in $S$  in less than $R(n, \eps/(2\log n))$ steps.  In this case, we will have  $r^\ast = R(n, \eps/(2\log n))$,  $B_{r^\ast}(a) = B_{r^\ast + 1}(a)  = S$, and $S$ will be a cluster of the final strong-diameter ball carving.}, in the whole component $G[S]$. 
    %For the rest of discussion, we write $R = R(n, \eps/(2\log n))$ for simplicity.
    
    Concretely, we find the smallest value $r^*$ for the radius parameter \[r\in [R(n, \eps/(2\log n)), R(n, \eps/(2\log n))+O( \log n / \eps)]\] such that we have $|B_{r}(a)|/|B_{r+1}(a)|\geq 1 - \eps/2$.
    Here, $B_{r}(a)$ denotes all nodes in $G[S]$ that have distance at most $r$ from node $a$, where distance is measured in the graph $G[S]$. Such a value exists as we cannot have more than $O( \log n / \eps)$ steps of growth by a factor of $1/(1-\eps/2)$, since otherwise the number of nodes in $S$ exceeds $n$, which is impossible. Moreover,  since the depth of $\mathcal{T}$ is at most $R(n, \eps / (2 \log n))$, we know that $\mathcal{C}\subseteq B_{R(n, \eps/(2\log n))}(a)$. 
    
    We can compute the value of $r^*$ by performing a simple BFS from $a$ in $G[S]$ and then gathering at $a$ the number of nodes within each distance \[r \in [R(n, \eps/(2\log n)), R(n, \eps/(2\log n)) +O(\log n/\eps)].\] Once $r^*$ is found, we declare $B_{r^*}(a)$ as one cluster of the final strong-diameter ball carving and put it aside, and we declare nodes of $B_{r^*+1}(a)\setminus B_{r^*}(a)$ as dead. Then, we remove nodes of $B_{r^*+1}(a)$ from $S$.  We then proceed to the next iteration, where we handle each connected component of the alive nodes in $S$ separately. Because $|S| \leq n/2^{i-1}$ and $|B_{r^*+1}(a)| \geq |\mathcal{C}|> n/2^{i}$, each connected component of the alive nodes in $S$ has at most $n/2^{i}$ nodes.
    %there is no such subtlety, actually.
    %There is one subtlety: The   ball $B_{r^*}(a)$ might include some nodes in $S$ that we had previously declared dead, when we carved the weak-diameter balls of $G[S]$, so in some sense they are resurrected, but then they are immediately put into the strong-diameter ball and thus removed from the rest of the procedure. 
    %We then remove all neighbors of $B_{r^*}(a)$ in $S$. Thus, all the remaining connected components of alive nodes in $S$ are not adjacent to $B_{r^*}(a)$. Now, a node remains alive only if it was clustered in the low-diameter ball carving and it was not in  $B_{r^*+1}(a)$. Thus, again, each connected component has at most $n/{2^i}$ nodes. This is because $S$ had at most $n/2^{i-1}$ nodes and from it we removed a strong-diameter ball $B_{r^*}(a)$ and its neighbors, which includes all of nodes of cluster $\mathcal{C}$, which has at least $n/2^{i}$ many nodes.
\end{itemize}

\paragraph{Correctness} At the end of  iteration $i$, the size of each connected component of alive nodes is at most $n/2^{i}$. Hence, after $\log n$ iterations, each remaining connected component is trivial and makes its own cluster in the final strong-diameter ball carving output. 

The fraction of nodes that are dead due to the ball carving algorithm $\mathcal{A}$ is at most $\eps' \log n = \eps/2$.
The fraction of nodes that are dead because of the  Case~(II) above is also at most $\eps/2$, as we can charge the cost of the dead nodes $B_{r^*+1}(a)\setminus B_{r^*}(a)$  to the ball $B_{r^*+1}(a)$, and $|B_{r^*+1}(a)\setminus B_{r^*}(a)| \leq (\eps/2)|B_{r^*+1}(a)|$. Hence the   dead nodes constitute of at most an $\eps$ fraction of all nodes.

 The strong-diameter of each cluster in the final clustering is at most $2R(n, \eps/(2\log n)) + O(\log n/\eps)$, because we have $r^\ast = R(n, \eps/(2\log n)) + O(\log n/\eps)$ in the algorithm.

\paragraph{Round Complexity} Now we analyze the round complexity of one iteration of the algorithm. 
\begin{itemize}
    \item  The invocation of $\mathcal{A}$ costs $T(n, \eps/(2\log n))$ rounds.
    \item  Checking whether there is one cluster $\mathcal{C}$ such that $|\mathcal{C}|> n/2^{i}$ in $G[S]$ can be done by an information gathering using the Steiner trees. This costs $R(n, \eps/(2\log n)) \cdot L(n, \eps/(2\log n))$ rounds.
    \item The computation of $r^\ast$ via BFS takes 
\ifdefined\DoubleColumn
\[O(r^\ast) = O(R(n, \eps/(2\log n)) + \log n/\eps)\]
\else $O(r^\ast) = O(R(n, \eps/(2\log n)) + \log n/\eps)$
\fi rounds.
\end{itemize} 
Hence the overall round complexity is
\ifdefined\DoubleColumn
\begin{align*}
& O(\log n) \cdot T(n, \eps/(2\log n)) \\
& +
O(\log n) \cdot  R(n, \eps/(2\log n)) \cdot L(n, \eps/(2\log n)) \\
& +
O(\log^2 n / \eps).\qedhere
\end{align*}
\else
\[
O(\log n) \cdot T(n, \eps/(2\log n)) +
O(\log n) \cdot  R(n, \eps/(2\log n)) \cdot L(n, \eps/(2\log n)) +
O(\log^2 n / \eps).\qedhere
\]
\fi
\end{proof}

Combining \Cref{thm:transformation} with the work of Ghaffari, Grunau, and Rozho\v{n}~\cite{ghaffariGR2021improvedND}, we obtain the following results.

\begin{theorem}\label[theorem]{thm:carving}
There is an $O(\log^7 n/\eps^2)$-round deterministic distributed algorithm in the \congest model that computes a strong-diameter ball carving  of an $n$-node graph $G$ that removes an $\eps$ fraction of the nodes so that each remaining connected component has strong diameter  $D = O(\log^3 n/\eps)$.
\end{theorem}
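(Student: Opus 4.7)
\medskip
\noindent\textbf{Proof proposal.} The plan is to instantiate \Cref{thm:transformation} with the deterministic weak-diameter ball carving of Ghaffari, Grunau, and Rozho\v{n}~\cite{ghaffariGR2021improvedND} plugged in as the black-box subroutine $\mathcal{A}$. As recorded in Table~\ref{tab:carving}, their algorithm supplies, for any parameter $\eps'$, a weak-diameter ball carving whose Steiner trees have depth $R(n,\eps') = O(\log^2 n/\eps')$ and whose round complexity is $T(n,\eps') = O(\log^4 n/(\eps')^2)$; the construction is also known to have polylogarithmic congestion $L(n,\eps') = \poly(\log n)$.

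Next, I would apply the transformation with the internal call to $\mathcal{A}$ using boundary parameter $\eps/(2\log n)$, as dictated by \Cref{thm:transformation}. A one-line substitution yields $R(n,\eps/(2\log n)) = O(\log^3 n/\eps)$ and $T(n,\eps/(2\log n)) = O(\log^6 n/\eps^2)$. Plugging the former into the strong-diameter guarantee $2R(n,\eps/(2\log n)) + O(\log n/\eps)$ of \Cref{thm:transformation} gives the target $O(\log^3 n/\eps)$.

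For the round complexity, I would check the three additive terms of \Cref{thm:transformation}. The first, $O(\log n)\cdot T(n,\eps/(2\log n))$, evaluates to $O(\log^7 n/\eps^2)$ and will dominate. The second, $O(\log n)\cdot R(n,\eps/(2\log n))\cdot L(n,\eps/(2\log n))$, is $O(\log^4 n/\eps)\cdot\poly(\log n)$; the third, $O(\log^2 n/\eps)$, is smaller; both are absorbed into $O(\log^7 n/\eps^2)$. The main (minor) obstacle is to confirm that the congestion $L(n,\eps')$ of the GGR ball carving is indeed polylogarithmic in $n$ and, at worst, depends only mildly on $1/\eps'$, so that it stays below the $O(\log^3 n/\eps)$ slack available in the second term. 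Once this standard property is in hand, combining the pieces yields the theorem directly.
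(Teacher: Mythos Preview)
Your proposal is correct and matches the paper's proof essentially line for line: the paper likewise instantiates \Cref{thm:transformation} with the GGR weak-diameter ball carving and performs the same substitutions. The one point you flagged as uncertain is resolved explicitly in the paper, which records the GGR congestion bound as $L(n,\eps)=O(\log n)$ (independent of $\eps$), making the second term $O(\log^5 n/\eps)$ and hence dominated by the first.
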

\begin{proof}
The work of Ghaffari, Grunau, and Rozho\v{n}~\cite{ghaffariGR2021improvedND} provides an algorithm $\mathcal{A}$ that 
on any $n$-node graph and given a parameter $\eps$, removes at most an $\eps$ fraction of the nodes and clusters the remaining ones into non-adjacent clusters meeting the following conditions:
%\todo{GGR does not state anything explicitly about $\eps$. The following is my recollection of how $\eps$ will appear in the bounds. We should double-check.}
\begin{itemize}
    \item Each cluster $\mathcal{C}$ has a Steiner tree $\mathcal{T}$ where all nodes in $\mathcal{C}$ appear as terminals of $\mathcal{T}$. 
    \begin{itemize}
    \item The diameter of $\mathcal{T}$ is at most $R(n, \eps) = O(\log^2 n/\eps)$.
    \item Each edge belongs to at most $L(n, \eps) = O(\log n)$ Steiner trees.
    \end{itemize}
    \item The algorithm takes at most  $T(n, \eps) = O(\log^4 n/\eps^2)$ rounds.
\end{itemize}
By plugging this directly into \Cref{thm:transformation}, we get an algorithm $\mathcal{B}$ that on any $n$-node graph and given a parameter $\eps$, removes at most an $\eps$ fraction of the nodes so that each remaining connected component has strong diameter $2R(n, \eps/(2\log n)) + O(\log n/\eps) = O(\log^3 n/\eps)$. This algorithm's round complexity is 
\ifdefined\DoubleColumn
\begin{align*}
& O(\log n) \cdot T(n, \eps/(2\log n)) \\
& \ \  +
O(\log n) \cdot  R(n, \eps/(2\log n)) \cdot L(n, \eps/(2\log n)) \\
& \ \ +
O(\log^2 n / \eps) \\  & = O(\log^7 n/\eps^2) + O(\log^5 n/\eps) + O(\log^2 n/\eps)  \\
  & = O(\log^7 n/\eps^2). \qedhere
\end{align*}
\else
\begin{align*}
& O(\log n) \cdot T(n, \eps/(2\log n)) +
O(\log n) \cdot  R(n, \eps/(2\log n)) \cdot L(n, \eps/(2\log n)) +
O(\log^2 n / \eps) \\  & = O(\log^7 n/\eps^2) + O(\log^5 n/\eps) + O(\log^2 n/\eps)  \\
  & = O(\log^7 n/\eps^2). \qedhere
\end{align*}
\fi
\end{proof}

By the standard reduction from network decompositions to ball carving, we obtain the following theorem.

\begin{theorem}\label[theorem]{thm:decomposition}
There is an $O(\log^8 n)$-round deterministic distributed algorithm in the \congest model that computes a strong-diameter network decomposition  of an $n$-node graph $G$ using $C = O(\log n)$ colors and with clusters of diameter $D = O(\log^3 n)$.
\end{theorem}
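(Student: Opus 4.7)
The plan is to apply \Cref{thm:carving} iteratively, following the standard reduction of Linial and Saks~\cite{linial93} mentioned in the introduction. Concretely, I would run $\log n$ iterations of the strong-diameter ball carving algorithm from \Cref{thm:carving}, each with boundary parameter $\eps = 1/2$. In the first iteration, we apply it to $G$; in iteration $i > 1$, we apply it in parallel to each connected component of the subgraph induced by the nodes that remain unclustered after iteration $i-1$. All clusters produced in iteration $i$ are assigned color $i$.

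Next, I would verify that this yields a valid $(O(\log n), O(\log^3 n))$ strong-diameter network decomposition. Since each iteration clusters at least a $(1-\eps) = 1/2$ fraction of the nodes on which it operates, after $\log n$ iterations every node has been placed in some cluster, so the number of colors is $O(\log n)$. For the diameter bound, each cluster produced in iteration $i$ has strong diameter $O(\log^3 n / \eps) = O(\log^3 n)$ within the subgraph on which the ball carving was run in that iteration; since any path internal to a cluster uses only nodes of that cluster, the same bound holds in $G[V_i]$ itself. Two clusters of the same color $i$ come from the same invocation of ball carving (possibly on different connected components), and the ball carving guarantees they are non-adjacent within that invocation's input subgraph, which suffices for non-adjacency in $G$ because any edge between them in $G$ would also be present in the induced subgraph. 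For the round complexity, each iteration costs $O(\log^7 n / \eps^2) = O(\log^7 n)$ rounds and different components in the same iteration can run in parallel without interference (they share no edges), so the total is $O(\log n) \cdot O(\log^7 n) = O(\log^8 n)$ rounds.

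The only mildly delicate point, which I would want to state explicitly, is the bookkeeping across iterations: each iteration operates on a \emph{subgraph} of $G$ (the induced subgraph on still-unclustered nodes), but the ball carving guarantees are about this subgraph, and they transfer correctly to $G$ because (i) an edge between two same-color clusters would still exist in the iteration's subgraph, contradicting non-adjacency there, and (ii) an internal shortest path in $G[V_i]$ is no longer than an internal shortest path in the iteration's subgraph restricted to $V_i$. I would also note that nodes do not need to know $n$ precisely up to a constant: either the standard $n = 2^{\Theta(\log n)}$ identifier bound, or an initial weak-diameter decomposition for per-component counting (as in the remark preceding \Cref{thm:transformation}), can supply it.

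The main obstacle is essentially notational rather than mathematical: one must be careful that the ``standard reduction'' is invoked in its strong-diameter form, since the ball carving in \Cref{thm:carving} already produces strong-diameter clusters and no further transformation is needed. Once that is observed, the theorem follows by a straightforward iteration argument with no loss beyond the $\log n$ factor in the round complexity.
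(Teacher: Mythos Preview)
Your proposal is correct and follows essentially the same approach as the paper's proof: iterate the ball carving of \Cref{thm:carving} with $\eps = 1/2$ for $O(\log n)$ rounds, assigning color $i$ to clusters formed in iteration $i$. The paper's proof is terser and omits the induced-subgraph bookkeeping you spell out, but the argument is identical.
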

\begin{proof}
We obtain the desired strong-diameter network decomposition via repeated application of the ball carving algorithm of \cref{thm:carving} with  $\eps=1/2$. In each application, we cluster at least half of the nodes into non-adjacent clusters of strong-diameter $O(\log^3 n)$, in $O(\log^7 n)$ rounds. We then remove all these clustered nodes and repeat on the remaining nodes. In each application, half of the remaining nodes get clustered; those clustered in the $i$th iteration receive color $i$. Hence, after $O(\log n)$ iterations, all nodes are clustered. Thus, we obtain our network decomposition with $O(\log n)$ colors, $O(\log^3 n)$ strong-diameter, and using $O(\log^8 n)$ rounds.
\end{proof}

%\newpage
\section{Improving Diameter}
\label[section]{sec:diameter-imp}

The  transformation procedure in \cref{sec:transformation} loses an  $O(\log n)$ factor in the diameter of the clusters. To mitigate this, in this section we explain a different algorithm that processes these larger strong-diameter balls and improves their diameter to $O(\log^2 n / \eps)$. We begin with a key lemma.

\begin{lemma}\label[lemma]{lem:tool}
Given  $0 < \eps < 1$, there is a distributed algorithm in the \congest model  on any $n$-node $D$-diameter graph $G=(V,E)$ returns either one of the following in $O(D \log n)$ rounds.
\begin{itemize}
    \item {\bf Balanced Sparse Cut:} Find two non-adjacent node sets $V_1 \subseteq V$ and $V_2 \subseteq V$ such that $|V_1| \geq n/3$, $|V_2| \geq n/3$, and $|V \setminus (V_1 \cup V_2)|  = O(\eps n / \log n)$. 
    \item {\bf Large  Small-diameter Component:} Find an $O(\log^2 n / \eps)$-diameter component $U \subseteq V$ such that $|U| \geq n/3$ and the number of nodes in $V \setminus U$ adjacent to $U$ is $O(\eps n / \log n)$.
\end{itemize}
\end{lemma}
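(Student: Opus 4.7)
I plan to run a BFS from an arbitrary source, use a doubling argument on the ball-size profile to find either a sparse layer in the balanced radius range or a grown ball that serves as the small-diameter component. Pick any $s \in V$ and run BFS from $s$; along the BFS tree pipeline the ball sizes $|B_r(s)|$ and layer sizes $|L_{r+1}(s)| = |B_{r+1}(s)| - |B_r(s)|$ back to $s$ in $O(D)$ rounds. Call a radius $r$ \emph{sparse} when $|L_{r+1}(s)| \leq \eps |B_r(s)| / (c \log n)$ for a suitable constant $c$, and otherwise \emph{non-sparse}.

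\textbf{Case analysis.} A standard doubling argument shows that the number of non-sparse radii is at most $O(\log^2 n / \eps)$: each non-sparse step multiplies the ball size by at least $1 + \eps/(c \log n)$, while the ball size stays at most $n$. Let $r_1 = \min\{r : |B_r(s)| \geq n/3\}$ and $r_2 = \max\{r : |B_r(s)| \leq 2n/3\}$. If some sparse radius $r^* \in [r_1, r_2 - 1]$ exists, I return the balanced sparse cut $V_1 = B_{r^*}(s)$, $V_2 = V \setminus B_{r^*+1}(s)$: both sides have at least $n/3$ nodes, they are non-adjacent, and the separator $L_{r^*+1}(s)$ has at most $O(\eps n/\log n)$ nodes. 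Otherwise every radius in $[r_1, r_2]$ is non-sparse, which by the same doubling bound (applied within the ball-size range $[n/3, 2n/3]$) forces $r_2 - r_1 = O(\log n / \eps)$, and an additional $O(\log n / \eps)$ steps past $r_2$ must contain a sparse radius $r^*$. In this case I return the large small-diameter component $U = B_{r^*}(s)$: it has $|U| > 2n/3$, boundary $L_{r^*+1}(s)$ of size $O(\eps n / \log n)$, and strong diameter at most $2 r^*$ in $G[U]$. The bound $2 r^* = O(\log^2 n / \eps)$ follows by also bounding $r_1$ via the doubling argument applied over $[0, r_1 - 1]$.

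\textbf{Main obstacle.} The delicate situation is when $r_1$ itself is large because many sparse radii $r < r_1$ have $|B_r(s)| < n/3$; in that regime the doubling argument alone fails to bound $r_1$. I plan to handle this by viewing the sparse layers inside $B_{r_1}(s)$ as separators that carve it into pairwise non-adjacent ``bands'' of size strictly less than $n/3$, while the exterior $V \setminus B_{r_1}(s)$ is large. By greedily grouping these bands together with (or against) the exterior, I form two non-adjacent parts $V_1, V_2$ each of size at least $n/3$, using the union of the chosen sparse layers as the separator; the sparsity budget keeps its total size at $O(\eps n / \log n)$. Arranging this grouping to hit the constant $n/3$ on both sides simultaneously, given that each individual band is itself smaller than $n/3$ while the separator subtracts from the available mass, is the main technical challenge. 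The overall round count is $O(D)$ for BFS plus pipelined aggregation and broadcasting of the final output, and with the $O(\log n)$ factor for locating sparse radii and coordinating across iterations this gives $O(D \log n)$ in total.
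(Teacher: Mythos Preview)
Your approach has a genuine gap: a single BFS from an arbitrary source $s$ cannot suffice, and your proposed fix via ``grouping bands'' cannot rescue it. Consider the graph consisting of a path $s=v_0,v_1,\dots,v_\ell$ with $\ell=n^{0.9}$, where $v_\ell$ is attached to a clique on the remaining $n-\ell-1$ nodes. From $s$ we have $|B_r(s)|=r+1$ for $r\le\ell$ and $|B_{\ell+1}(s)|=n$, so $r_1=r_2=\ell+1$. Your main case analysis finds no sparse radius in $[r_1,r_2-1]=\emptyset$, then outputs $U=B_{r^*}(s)=V$ with diameter $\Theta(n^{0.9})$, violating the $O(\log^2 n/\eps)$ bound. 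Your fix tries to manufacture a balanced sparse cut from the sparse layers inside $B_{r_1-1}(s)$, but here all bands together contain only $\ell+1<n/3$ nodes, while the clique cannot be split into two non-adjacent pieces; hence no balanced sparse cut exists in this graph \emph{at all}, and the only valid output is a small-diameter component centered in the clique---which a BFS rooted at $s$ will never find. The flaw is not in the grouping combinatorics but in the premise that one fixed source works.

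The paper's proof avoids this by never committing to a single source. It maintains a set $S$ (initially $S=V$), defines $a=a(S)$ and $b=b(S)$ as the smallest radii for which the $a$- and $b$-neighborhoods of $S$ reach $n/3$ and $2n/3$ respectively, and observes that if $b-a=\Omega(\log n/\eps)$ then a sparse layer in $[a,b]$ gives the balanced cut. Otherwise it halves $S$ into $S_1,S_2$ and uses the key inequality $\min\{a(S_1),a(S_2)\}\le b(S)\le a(S)+O(\log n/\eps)$ (since the $b$-neighborhoods of $S_1,S_2$ cover that of $S$) to recurse on the better half. After $O(\log n)$ halvings, $S=\{v\}$ with $a(\{v\})=O(\log^2 n/\eps)$, and a final ball-growing from $v$ yields the small-diameter component. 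In the tail-plus-clique example this halving quickly discards the tail nodes and converges to a clique vertex. This adaptive source-selection is the missing idea in your proposal.
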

\begin{proof}
The algorithm has $O(\log n)$ iterations, and each iteration costs $O(D)$ rounds. We maintain a set $S$ throughout the algorithm, and initially we set $S = V$ to be the set of all nodes.
For any $S \subseteq V$, we consider the following two parameters:
\begin{itemize}
    \item $a$ is the smallest number such that the radius-$a$ neighborhood of $S$ has at least $n/3$ nodes.
     \item $b$ is the smallest number such that the radius-$b$ neighborhood of $S$ has at least $2n/3$ nodes.
\end{itemize}

The design of algorithm aims to satisfy the following induction hypothesis for the set $S$ at the start of the $i$th iteration:
\begin{itemize}
    \item $|S| \leq n \cdot 2^{-(i-1)}$.
    \item $a = O((i-1) \log n / \eps)$.
\end{itemize}

Initially, we have $|S| = n$ and $a = 0$.
After $O(\log n)$ iterations, we end up with $S = \{v\}$ for some node $v$  with $a = O(\log^2 n / \eps)$.
After that, we start a  BFS  from $v$ to find the  layer number $r^\ast$ that minimizes $|B_{r+1}(v)|/|B_{r}(v)|$ among all layer numbers $a \leq r \leq a + O(\log n / \eps)$.
It is guaranteed that $|B_{r^\ast+1}(v)|/|B_{r^\ast}(v)| = 1 + O(\eps / \log n)$, and so $B_{r^\ast+1}(v) \setminus B_{r^\ast}(v)$ has at most $O(\eps n / \log n)$ nodes.
Setting $U \leftarrow B_{r^\ast}(v)$  gives us the required $O( \log^2 n  / \eps)$-diameter subgraph with size at least $n/3$.

\paragraph{One Iteration}
Consider the set $S$ at the beginning of this iteration.
If $b - a = \Omega(\log n / \eps)$, then we are in the good case in that we can find a balanced sparse cut by cutting along the weakest layer among these $b-a$ layers. More concretely, let $B_{k}(S)$ denote the set of all nodes within distance $k$ to $S$. We find the  layer number $r^\ast$ that minimizes $|B_{r+1}(S)|/|B_{r}(S)|$ among all layer numbers $a \leq r \leq b-2$.
It is guaranteed that $|B_{r^\ast+1}(S)|/|B_{r^\ast}(S)| = 1 + O(\eps / \log n)$, and so $B_{r^\ast+1}(S) \setminus B_{r^\ast}(S)$ has at most $O(\eps n / \log n)$ nodes.
Setting $V_1 \leftarrow B_{r^\ast}(S)$ and $V_2 \leftarrow V \setminus B_{r^\ast+1}(S)$ gives us the required balanced sparse cut, and the algorithm terminates. 

In what follows, we focus on the case of $b - a = O(\log n / \eps)$.
We divide the set $S$ into two equal-sized sets $S_1$ and $S_2$ arbitrarily. This can be done in $O(D)$ rounds as follows. Identify the node $v^\ast$ with the smallest identifier in the graph. Find a BFS tree starting from $v^\ast$. Sort the nodes in $S$ according to the in-order traversal of the tree. Set $S_1$ to be the first half of the nodes in the sorted order, and set $S_2 = S \setminus S_1$.

We consider the four parameters $a_1
$, $b_1$, $a_2$, and $b_2$ associated with $S_1$ and $S_2$. A crucial observation is the following:
\[\min \{a_1, a_2\} \leq b = a + (b-a) = a + O(\log n / \eps).\]
To see this inequality, let $W$ (resp., $W_1$ and $W_2$) be the set of all nodes within distance $b$ to $S$ (resp., $S_1$ and $S_2$). We have $W = W_1 \cup W_2$, so $\max\{|W_1|, |W_2|\} \geq |W|/2 \geq n/3$.
Note that $|W_i| \geq n/3$ implies that $a_i \leq b$, and  hence  
$\min \{a_1, a_2\} \leq b$.

If $a_1 < a_2$, we move on to the next iteration with $S \leftarrow S_1$.
Otherwise, we   move on to the next iteration with $S \leftarrow S_2$.
In either case, the induction hypothesis is met, as  $\min \{a_1, a_2\} \leq a +  O(\log n / \eps)$.

The cost of one iteration is $O(D)$ rounds. 
The calculation of the parameters $a$ and $b$, the task of finding a balanced sparse cut when  $b - a = \Omega(\log n / \eps)$, and the partition of $S$ into $S_1$ and $S_2$ can all be done in $O(D)$ rounds. 
\end{proof}

Using \cref{lem:tool}, we prove the following theorem, which transforms \emph{any} strong-diameter ball carving algorithm into a strong-diameter ball carving algorithm with diameter bound $O(\log^2 n /\eps)$.

\begin{theorem}\label[theorem]{thm:transformation2}
Suppose that there is a distributed algorithm $\mathcal{A}$ in the \congest model that, on any $n$-node graph and given a parameter $\eps$, removes at most an $\eps$ fraction of the nodes so that each remaining connected component has strong diameter $R(n, \eps)$, and using $T(n, \eps)$ rounds.  
Then, there is a distributed algorithms $\mathcal{B}$ in the \congest model that attains the strong diameter bound $O(\log^2 n /\eps)$, with a round complexity of 
\[
O(\log n) \cdot T(n, \Theta(\eps / \log n)) +
O(\log^2 n) \cdot R(n, \Theta(\eps / \log n)). 
\]
\end{theorem}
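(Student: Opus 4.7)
The plan is to build $\mathcal{B}$ as a divide-and-conquer procedure with $O(\log n)$ iterations, maintaining a disjoint family of ``active'' node sets with the invariant that at the start of iteration $i$ every active set has size at most $n\cdot(2/3)^{i-1}$. Initially the only active set is $V$; after $O(\log n)$ iterations every active set is a singleton (a trivial diameter-$0$ cluster) and the algorithm stops. One iteration processes an active set $S$ in parallel with all the others as follows. First, invoke $\mathcal{A}$ on $G[S]$ with boundary parameter $\eps_1=\Theta(\eps/\log n)$; this kills at most $\eps_1|S|$ nodes and leaves connected sub-components each of strong diameter at most $R_0=R(n,\eps_1)$. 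Next, on every such sub-component $G[S']$ invoke \cref{lem:tool} with parameter $\Theta(\eps)$. If \cref{lem:tool} returns a large small-diameter component $U\subseteq S'$, declare $U$ a final output cluster, declare its $O(\eps|S'|/\log|S'|)$ boundary nodes dead, and promote each connected component of $G[S'\setminus(U\cup\mathrm{bdry}(U))]$ (each of size at most $\tfrac{2}{3}|S|$) to an active set for the next iteration. If \cref{lem:tool} instead returns a balanced sparse cut $(V_1,V_2)$, declare the separator dead and promote $V_1$ and $V_2$ (each of size at most $\tfrac{2}{3}|S|$) to active sets. In both cases the shrinking invariant is preserved.

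Correctness reduces to three claims. First, every frozen cluster $U$ has strong diameter $O(\log^2 n/\eps)$ inside $G[U]$ itself: \cref{lem:tool} constructs $U$ as a BFS ball $B_{r^*}(v)$ in $G[S']$ of radius $r^*=O(\log^2 n/\eps)$, and since a BFS tree from $v$ stays inside the ball, every node of $U$ is within $G[U]$-distance $r^*$ from $v$, giving diameter at most $2r^*$. Second, each frozen cluster is surrounded by dead nodes (its \cref{lem:tool} boundary, plus nodes that $\mathcal{A}$ killed, plus previously declared separators), so the frozen clusters and the final singletons are precisely the connected components of the alive subgraph. Third, active sets at any one iteration are node-disjoint, so the aggregate dead-node count over all iterations is at most $(\eps_1\log n)n$ from $\mathcal{A}$ plus $O(\log n)\cdot O(\eps n/\log n)=O(\eps)n$ from \cref{lem:tool}; choosing the hidden constants in $\eps_1$ and in the \cref{lem:tool} parameter small enough drives the total below $\eps n$.

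For the round count, since parallel invocations of $\mathcal{A}$ and of \cref{lem:tool} on disjoint active sets do not interfere in the \congest model, a single iteration costs $T(n,\eps_1)+O(R_0\log n)$ rounds; multiplying by the $O(\log n)$ iterations gives the claimed $O(\log n)\cdot T(n,\Theta(\eps/\log n))+O(\log^2 n)\cdot R(n,\Theta(\eps/\log n))$. The main subtlety I anticipate is the bookkeeping on very small active sets, where the ``$\log n$'' inside \cref{lem:tool}'s guarantees refers to the size of the lemma's input rather than the global $n$; this is easily finessed by terminating the recursion on active sets below a suitable polylogarithmic threshold and finishing them off via $\mathcal{A}$, whose contribution to both the death budget and the final diameter is negligible compared to the bounds above.
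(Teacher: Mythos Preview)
Your proposal is correct and follows essentially the same approach as the paper: iterate $O(\log n)$ times, in each iteration running $\mathcal{A}$ to cap the strong diameter and then invoking \cref{lem:tool} on every resulting component, recursing on the geometrically shrinking leftover pieces. One remark on the subtlety you flag at the end: finishing small active sets ``via $\mathcal{A}$'' would leave them with diameter $R(n,\eps_1)$ rather than $O(\log^2 n/\eps)$, and a polylogarithmic size threshold does not make $\log|S'|$ comparable to $\log n$ in the death budget; the clean fix---which the paper also tacitly relies on---is to run the procedure of \cref{lem:tool} with the \emph{global} $n$ in its internal thresholds, yielding separator size $O(\eps|S'|/\log n)$ and component diameter $O(\log^2 n/\eps)$ irrespective of $|S'|$, so the per-iteration deaths sum directly to $O(\eps n/\log n)$.
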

\begin{proof}
Similarly, we build the strong-diameter ball carving algorithm  $\mathcal{B}$  via black-box invocations to the strong-diameter ball carving algorithm $\mathcal{A}$. 

\paragraph{Algorithm}
Informally, the algorithm  applies the algorithm of \cref{lem:tool} recursively to each strong-diameter cluster in a given strong-diameter ball carving. This will turn the given strong-diameter ball carving into one with an $O(\log^2 n / \eps)$ diameter bound, at the cost of removing some nodes. After each step of recursion, the number of nodes in each part will be reduced by a constant factor, so there is at most $O(\log n)$ levels of recursion. Moreover, at the beginning of each level of recursion, we will have to  run a strong-diameter ball carving algorithm again because the  diameter of the subgraph currently under consideration might be unbounded.

More concretely, let $\mathcal{A}_1$ be the  strong-diameter ball carving algorithm $\mathcal{A}$ promised in the theorem, and let $\mathcal{A}_2$ be the algorithm for \cref{lem:tool}. The algorithm $\mathcal{B}$ is as follows.

\begin{itemize}
    \item Run $\mathcal{A}_1$ with parameter $\eps' = \Theta(\eps / \log n)$. All nodes not in a cluster are dead.
    \item For each cluster $\mathcal{C}$, run  $\mathcal{A}_2$ on the subgraph induced by $\mathcal{C}$. Based on the outcome of $\mathcal{A}_2$, there are two cases.
    \begin{itemize}
        \item If the outcome is a balanced sparse cut, then we recurse on both of $G[V_1]$ and $G[V_2]$, in parallel. All nodes in $\mathcal{C} \setminus (V_1 \cup V_2)$ are dead.
        \item If the outcome is a large small-diameter component, then we add $U$ to be a cluster in the final clustering, and recurse on the subgraph induced by the nodes in $\mathcal{C}$ that are not adjacent to $U$. All nodes in $\mathcal{C} \setminus U$ that are adjacent to $U$ are dead.
    \end{itemize}
\end{itemize}

\paragraph{Analysis} In each iteration, at most $\eps' = O(\eps / \log n)$ fraction of the nodes are dead due to the ball carving algorithm $\mathcal{A}_1$, and at most $O(\eps / \log n)$ of the nodes are dead due to the post-processing for the outcome  of $\mathcal{A}_2$.  Since there are $O(\log n)$ levels of recursion, at most $\eps$ fraction of the nodes are dead throughout the algorithm. Hence the final clustering contains at least $1 - \eps$ fraction of the nodes. Moreover, by the specification of \cref{lem:tool}, each cluster in the final clustering has strong diameter $O(\log^2 n / \eps)$, as required.

\paragraph{Round Complexity}  We only run the algorithm for \cref{lem:tool} on subgraphs with strong diameter %at most
$R(n, \Theta(\eps / \log n))$, so the algorithm for \cref{lem:tool}  costs $R(n, \Theta(\eps / \log n)) \cdot O(\log n)$ rounds. The main algorithm has $O(\log n)$ levels of recursion, where in each level we invoke the black-box algorithm $\mathcal{A}$ once and the algorithm for \cref{lem:tool} once. Hence the overall round complexity is
\[
O(\log n) \cdot T(n, \Theta(\eps / \log n)) +
O(\log^2 n) \cdot R(n, \Theta(\eps / \log n)). \qedhere
\]
\end{proof}

Similarly, combining \Cref{thm:transformation2} and \Cref{thm:carving}, we obtain the following result.

\begin{theorem}\label[theorem]{thm:carving2}
There is an $O(\log^{10} n/\eps^2)$-round deterministic distributed algorithm in the \congest model that computes a strong-diameter ball carving  of an $n$-node graph $G$ that removes an $\eps$ fraction of the nodes so that each remaining connected component has strong diameter  $D = O(\log^2 n/\eps)$.
\end{theorem}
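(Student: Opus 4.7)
The plan is to derive \Cref{thm:carving2} as an immediate consequence of feeding the strong-diameter ball carving of \Cref{thm:carving} into the diameter-improvement transformation of \Cref{thm:transformation2}. The first theorem already furnishes a strong-diameter ball carving; the second takes any such carving as a black box and forces its diameter bound down to $O(\log^2 n / \eps)$ at only a modest cost in round complexity. So the proof is simply an instantiation and a polylogarithmic parameter calculation.

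Concretely, I would let $\mathcal{A}$ be the algorithm of \Cref{thm:carving}, for which $R(n, \eps) = O(\log^3 n / \eps)$ and $T(n, \eps) = O(\log^7 n / \eps^2)$, and then apply \Cref{thm:transformation2} to $\mathcal{A}$ with the given target boundary parameter $\eps$. The resulting algorithm $\mathcal{B}$ automatically removes at most an $\eps$ fraction of nodes and guarantees that every surviving connected component has strong diameter $O(\log^2 n / \eps)$, matching the statement. Since both ingredient algorithms are deterministic \congest algorithms, so is the composition.

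For the round complexity, \Cref{thm:transformation2} invokes $\mathcal{A}$ with boundary parameter $\Theta(\eps / \log n)$, so I evaluate $R(n, \Theta(\eps/\log n)) = O(\log^4 n / \eps)$ and $T(n, \Theta(\eps/\log n)) = O(\log^9 n / \eps^2)$. Substituting into the round-complexity formula of \Cref{thm:transformation2} gives
\[ O(\log n) \cdot O(\log^9 n / \eps^2) + O(\log^2 n) \cdot O(\log^4 n / \eps) = O(\log^{10} n / \eps^2) + O(\log^6 n / \eps), \]
which simplifies to $O(\log^{10} n / \eps^2)$. There is no genuine obstacle: the heavy lifting sits inside \Cref{thm:transformation2} and \Cref{thm:carving}, and the only care required is verifying that $\mathcal{A}$ from \Cref{thm:carving} satisfies the input specification of \Cref{thm:transformation2} (a deterministic \congest algorithm parametrized by $n$ and $\eps$ producing a strong-diameter carving), which it does verbatim, and then correctly identifying the dominant term in the round count.
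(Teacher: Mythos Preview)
Your proposal is correct and follows exactly the paper's approach: instantiate \Cref{thm:transformation2} with the algorithm of \Cref{thm:carving} (having $R(n,\eps)=O(\log^3 n/\eps)$ and $T(n,\eps)=O(\log^7 n/\eps^2)$) and compute the resulting round complexity. Your calculation is slightly more explicit than the paper's, but the argument is identical.
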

\begin{proof}
The ball carving algorithm of \Cref{thm:carving} has strong diameter $R(n,\eps) = O(\log^3 n / \eps)$ and round complexity $T(n, \eps) = O(\log^7 n / \eps^2)$. Combining this algorithm with the transformation described in \Cref{thm:transformation2}, we obtain a new ball carving algorithm with strong diameter $D = O(\log^2 n/\eps)$ and round complexity
\ifdefined\DoubleColumn
\begin{align*}
& O(\log n) \cdot T(n, \Theta(\eps / \log n)) +
O(\log^2 n) \cdot R(n, \Theta(\eps / \log n))\\
& = O(\log^{10} n/\eps^2). \qedhere
\end{align*}
\else
\[
O(\log n) \cdot T(n, \Theta(\eps / \log n)) +
O(\log^2 n) \cdot R(n, \Theta(\eps / \log n)) = O(\log^{10} n/\eps^2). \qedhere
\]
\fi
\end{proof}

Again, by the standard reduction from network decompositions to ball carving, we obtain the following theorem.

\begin{theorem}\label[theorem]{thm:decomposition2}
There is an $O(\log^{11} n)$-round deterministic distributed algorithm in the \congest model that computes a strong-diameter network decomposition  of an $n$-node graph $G$ using $C = O(\log n)$ colors and with clusters of diameter $D = O(\log^2 n)$.
\end{theorem}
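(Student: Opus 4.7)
The plan is to obtain \cref{thm:decomposition2} by applying the standard Linial--Saks reduction (as already used in the proof of \cref{thm:decomposition}) to the improved strong-diameter ball carving of \cref{thm:carving2}. Concretely, I would iterate the ball carving procedure $O(\log n)$ times with boundary parameter $\eps = 1/2$, each time restricted to the subgraph induced by the nodes that have not yet been clustered, and assign every cluster produced in the $i$th iteration the color $i$.

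Next, I would verify the three parameters. For \emph{diameter}: every cluster output by \cref{thm:carving2} with $\eps = 1/2$ has strong diameter $O(\log^2 n / \eps) = O(\log^2 n)$, and since each color class is exactly the output of one invocation, each cluster in the final decomposition has strong diameter $O(\log^2 n)$. For \emph{number of colors}: by the defining guarantee of ball carving with $\eps = 1/2$, in every iteration at least half of the currently remaining nodes are placed into clusters, so after $\lceil \log n \rceil$ iterations no unclustered node remains, giving $C = O(\log n)$ colors. For \emph{proper coloring}: clusters arising in the same iteration come from a single invocation of the ball carving algorithm, which guarantees them to be pairwise non-adjacent, so no two adjacent clusters share a color.

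For the \emph{round complexity}: each invocation of \cref{thm:carving2} with $\eps = 1/2$ costs $O(\log^{10} n)$ rounds in the \congest model, and different connected components of the remaining subgraph can execute their copy of the algorithm in parallel without interference. Summing over the $O(\log n)$ iterations gives the desired $O(\log^{11} n)$ total round complexity. One small point to be careful about is that $n$ in \cref{thm:carving2} refers to the size of the graph on which the algorithm is run, but the subgraph of alive nodes in each iteration has at most $n$ nodes, so the bound $O(\log^{10} n)$ remains a valid upper bound per iteration.

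I do not expect any real obstacle: the reduction is exactly the one already invoked (and spelled out in detail) in the proof of \cref{thm:decomposition}, and the only change is substituting the ball carving parameters of \cref{thm:carving2} for those of \cref{thm:carving}. The proof therefore reduces to a two-line application of the reduction together with substitution of parameters.
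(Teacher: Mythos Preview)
Your proposal is correct and takes essentially the same approach as the paper: both apply the standard Linial--Saks reduction by iterating the ball carving of \cref{thm:carving2} with $\eps=1/2$ for $O(\log n)$ rounds, assigning color $i$ to the clusters formed in iteration $i$, and reading off the parameters $C=O(\log n)$, $D=O(\log^2 n)$, and $O(\log^{11} n)$ rounds. (Indeed, the paper's proof text contains a typo referencing \cref{thm:carving} rather than \cref{thm:carving2}, but the stated bounds make clear that \cref{thm:carving2} is intended, exactly as you have it.)
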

\begin{proof}
Similar to the proof of \cref{thm:decomposition}, we obtain the desired strong-diameter network decomposition via $O(\log n)$ iterations of the ball carving algorithm of \cref{thm:carving} with  $\eps=1/2$. In each application, we cluster at least half of the nodes into non-adjacent clusters of strong-diameter $O(\log^2 n)$, in $O(\log^{10} n)$ rounds. We then remove all these clustered nodes and repeat on the remaining nodes. In each application, half of the remaining nodes get clustered; those clustered in the $i$th iteration receive color $i$. Hence, after $O(\log n)$ iterations, all nodes are clustered. Thus, we obtain our network decomposition with $O(\log n)$ colors, $O(\log^2 n)$ strong-diameter, and using $O(\log^{11} n)$ rounds.
\end{proof}

\subsection*{Barriers for Further Improvement in the Current Construction}

Coincidentally, both of the deterministic ball carving algorithms in this section and in~\cite{ghaffariGR2021improvedND} achieve the same diameter  bound of $O(\log^2 n / \eps)$. It remains an intriguing open problem to improve this bound for deterministic ball carving, even for weak diameter.

We show that the bound $O(\log^2 n / \eps)$
is the limit of the approach taken in this section, in the following sense. There is a graph with conductance $\Omega(\eps / \log n)$ such that any subgraph with  $\poly(n)$ nodes has diameter $\Omega(\log^2 n /\eps)$. In particular, such a graph has the following properties.
\begin{itemize}
    \item {\bf No Balanced Sparse Cut:} For any two non-adjacent node sets $V_1 \subseteq V$ and $V_2 \subseteq V$ such that $|V_1| \geq n/3$, $|V_2| \geq n/3$, we must have $|V \setminus (V_1 \cup V_2)|  = \Omega(\eps n/ \log n)$. 
    \item {\bf No Large  Small-diameter Component:} For any subset $U \subseteq V$ such that $|U| \geq n/3$, the diameter of $G[U]$ must be $\Omega(\log^2 n /\eps)$.
\end{itemize}

Hence the parameters in \cref{lem:tool} are the best possible. Therefore, to bypass the $O(\log^2 n / \eps)$ barrier, one will need to consider a different approach that can simultaneously grow multiple small-diameter clusters.

\paragraph{Construction}
To construct such a graph, let $n' = O(\eps n / \log n)$, and  take any $n'$-node expander graph $G_1$ with constant maximum degree $\Delta = \Theta(1)$ and constant conductance $\Phi = \Theta(1)$. Then we subdivide each edge into a path of length $\log n / \eps$ to obtain an $n$-node graph $G_2$. It is clear that the new graph $G_2$ has conductance $\Theta(\eps / \log n)$. Moreover, because $G_1$ has $\Delta = \Theta(1)$, any subgraph with $\poly(n)$ nodes must have  diameter $\Omega(\log n)$. Therefore, the construction of $G_2$ ensures that any subgraph with $\poly(n)$ nodes must have  diameter $\Omega(\log^2 n / \eps)$, as required.

%\section{Extensions}
%\todo{add something (and connect in the intro) or drop this section}

%\newpage
\bibliographystyle{alpha}
\bibliography{ref}

\end{document}